\documentclass[smallabstract,smallcaptions]{dccpaper}

\usepackage{epsfig}
\usepackage{citesort}
\usepackage{amsmath}
\usepackage{amssymb}
\usepackage{color}
\usepackage{url}
\usepackage{sansmath}
\usepackage{amsthm}
\usepackage{mymacros}
\usepackage{url}
\usepackage{authblk}

\usepackage{booktabs}

\newlength{\figurewidth}
\newlength{\smallfigurewidth}

\newtheorem{lemma}{Lemma}

\newtheorem{definition}{Definition}
\DeclareMathAlphabet{\mathqhv}{OT1}{qhv}{m}{n}

\setlength{\smallfigurewidth}{2.75in}
\setlength{\figurewidth}{6in}

\begin{document}

\setlength{\affilsep}{0.5em}

\title{%
\large \textbf{A grammar compressor for collections of reads with applications to the construction of the BWT}
}

\author[ ]{Diego D\'iaz-Dom\'inguez}
\author[ ]{Gonzalo Navarro}
\affil[ ]{CeBiB --- Center for Biotechnology and Bioengineering, Chile}
\affil[ ]{Department of Computer Science, University of Chile, Santiago, Chile}
\affil[ ]{\texttt{\{diediaz,gnavarro\}@dcc.uchile.cl}}



\maketitle
\thispagestyle{empty}

\begin{abstract}
We describe a grammar for DNA sequencing reads from which we can compute the BWT directly. Our motivation is to perform in succinct space genomic analyses that require complex string queries not yet supported by repetition-based self-indexes. Our approach is to store the set of reads as a grammar, but when required, compute its BWT to carry out the analysis by using self-indexes. Our experiments in real data showed that the space reduction we achieve with our compressor is competitive with LZ-based methods and better than entropy-based approaches. Compared to other popular grammars, in this kind of data, we achieve, on average, 12\% of extra compression and require less working space and time. 
\end{abstract}

\Section{Introduction}

\emph{Grammar compression} of a text $T[1..n]$ consists in building a context-free grammar $\mathcal G$ that generates (only) $T$ \cite{KY00}. Efficient grammar constructions, like RePair~\cite{l2000off}, achieve very good compression ratios in practice. When $T$ is a repetitive text collection, in particular, the produced grammar $\mathcal G$ can be much smaller than $T$, even breaking the statistical entropy lower bound \cite{k2013com}.

One of the benefits of this technique compared to other compression schemes that succeed on repetitive texts, like Lempel-Ziv, is that we can directly access any substring of $T$ from $\mathcal G$ with only an additive logarithmic time penalty \cite{b2015ran},
thereby enabling manipulation of the data always in compressed form. Further, it is possible to develop self-indexes of size $O(|\mathcal{G}|)$, which support indexed string searches \cite{cl2011s}. 

Still, this functionality is limited compared to the complex sequence analyses required in computational biology scenarios \cite{makinen2015genome}. Many of those problems, for example computing maximal repeats, maximal unique matches~\cite{b2013ver}, or suffix-prefix overlaps~\cite{v2010app}, rely on the Burrows-Wheeler Transform (BWT) \cite{bw94}, a permutation of $T$ that can be compressed significantly \cite{ferragina2005indexing}, even for highly repetitive text collections \cite{ma2010s,g2018op}.

The so-called Run-Length BWT (RLBWT) \cite{ma2010s,g2018op} exploits the fact that, on highly repetitive text collections, the BWT consists of a small number of long runs of the same letter. It can then enable complex sequence analyses on genome collections in very little space. Still, this is not the most common type of sequence collection one finds in bioinformatic applications. Genomes are reconstructed from huge multisets of short and overlapping DNA strings called sequencing reads. Assembling genomes is expensive as it requires extensive computations of suffix-prefix overlaps between the reads, or aligning them to a reference genome.  As a consequence,  those large sets of reads, which are the form in which sequencing technologies deliver their output, are also the most common form in which sequencing data is available, much more common than assembled genomes. 

On these sets of short sequences, the RLBWT does not compress significantly \cite{dolle2017using}, whereas grammars and Lempel-Ziv still obtain good space reductions; in particular, grammars permit manipulating the set of reads directly in compressed form.

Some authors have tried to implement regular bioinformatic analyses on top of the BWT of the reads \cite{simp2012eff,cox2012large,dolle2017using}, as this representation retains more information and uses less space than typical plain genomic-tailored data structures. The problem, however, is that decompressing the collection and then building its BWT requires significant storage and processing resources. An interesting alternative would be to build the transform directly from the compressed data. As far as we know, this idea has been implemented only from Lempel-Ziv compression and is considerably slow \cite{pol2017from}. As discussed, the Lempel-Ziv format does not enable, on the other hand, direct access to the reads for other purposes.

\paragraph{Our contribution.} We propose a new grammar aimed for collections of reads that (i) compresses them at high ratios, (ii) provides fast direct access to the reads in compressed form, and (iii) efficiently computes the BWT for string multisets~\cite{cox2012large} directly from the representation. The working space of our compressor is 50\%--60\% of the input, which is far less than most grammar construction algorithms. On top of the grammar, we devise an algorithm for building the BWT that requires space proportional to the number of rules plus the number of runs in the BWT.

\Section{Related concepts}

\paragraph{Suffix Array and Burrows-Wheeler Transform (BWT).} 

Consider a string $T[1..n-1]$ over alphabet $\Sigma[2..\sigma]$, and the sentinel symbol $\Sigma[1]=\texttt{\$}$, which we append at the end of $T$. The \emph{suffix array} (SA)~\cite{MM93} of $T$ is a permutation of $[n]$ that enumerates the suffixes $T[i..n]$ of $T$ in increasing lexicographic order, $T[SA[i]..n] < T[SA[i+1]..n]$. 
The BWT~\cite{bw94} is a permutation of the symbols of $T$ obtained by extracting the symbol that precedes each suffix in $SA$, that is, $BWT[i] = T[SA[i]-1]$ (assuming $T[0]=T[n]=\texttt{\$}$). 
A run-length compressed representation of the BWT
\cite{ma2010s} adds sublinear-size structures that compute, in logarithmic time, the so-called $\mathsf{LF}$ step and its inverse: if $BWT[j]$ corresponds to $T[i]$ and $BWT[j']$ to $T[i-1]$ (or to $T[n]=\textsf{\$}$ if $i=1$), then $\mathsf{LF}(j)=j'$ and $\mathsf{LF}^{-1}(j')=j$. Note that $\mathsf{LF}$ regards $T$ as a circular string.

Let $\mathcal{T}=\{T_1,T_2,...T_m\}$ be a collection of $m$ strings of average size $k$. We then define the string $T[1..n]=T_1\texttt{\$}T_2\texttt{\$}..T_n\texttt{\$}$. The extended BWT (eBWT) of $\mathcal{T}$ \cite{cox2012large} regards it as a set of independent circular strings: the BWT of $T$ is slightly modified so that, if $eBWT[j]$ corresponds to $T_i[1]$ inside $T$, then $\mathsf{LF}(j)=j'$, so that $eBWT[j']$ corresponds to the sentinel \texttt{\$} at the end of $T_i$, not of $T_{i-1}$.

\paragraph{Induced suffix sorting (ISS).} ISS~\cite{n2009li} is a technique that computes the lexicographical ranks of a subset of suffixes in a string $T$ and then it uses the result to induce the order of the rest. This method is the underlying procedure in several algorithms that build the SA~\cite{n2009li} and the BWT~\cite{ok2009li} in linear time. For this article, the part of the ISS algorithm that computes the lexicographical ranks of a subset of suffixes is of interest. The authors give the following definitions:

\begin{definition}
A character $T[i]$ is called L-type if $T[i]>T[i-1]$ or if $T[i]=T[i-1]$ and $T[i-1]$ also L-type. On the other hand, $T[i]$ is said to be S-type if $T[i]<T[i]$ or if $T[i]=T[i+1]$ and $T[i+1]$ is also S-type. By default, symbol $T[n]$, the one with the sentinel, is S-type.  
\end{definition}

\begin{definition}
$T[i]$ is called LMS-type if $T[i]$ is S-type and $T[i-1]$ is L-type.
\end{definition}

\begin{definition}
A LMS substring is (i) a substring $T[i..j]$ with both $T[i]$ and $T[j]$ being LMS characters, and there is no other LMS character in the substring, for $i \neq j$; or (ii) the sentinel itself.
\end{definition}

The algorithm only computes the ranks of the suffixes prefixed by $LMS$ substrings. It obtains the ranks by sorting the substrings lexicographically. When an $LMS$ substring is prefix of another, the smallest one gets the greatest rank \cite{n2009li}. If there are at least two $LMS$ substrings with the same sequence in $T$, the algorithm replaces all the substrings by their orders and applies recursively the same idea until all the characters in $T$ are distinct. 

Recently, Nunes \emph{et al.}~\cite{n2018gr} showed that this procedure can be used to build a grammar of the text. In every recursive step of ISS, they get the set of distinct $LMS$ substring to create new rules. The ranks of the strings in the set are used to produce nonterminal symbols while their sequences become the replacements for those nonterminals. In the last recursion step, the input text $T$ becomes the replacement for the start symbol of the grammar.

\paragraph{Level-Order Unary Degree Sequence (LOUDS).} LOUDS \cite{j89} is a succinct representation that encodes an ordinal tree $T$ with $t$ nodes into a bitmap $B[1..2t+1]$, by traversing its nodes in levelwise order and writing down its arities in unary. The nodes are identified by the position where their description start in $B$. Adding $o(t)$ bits on top of $B$ enables constant-time operations like $\textsf{parent}(u)$ (the parent of node $u$), $\textsf{child}(u,i)$ (the $i$-th child of $u$), $\textsf{psibling}(u)$ (the sibling preceding $u$), $\textsf{nodemap}(u)$ (the level-wise rank of node $u$), $\mathsf{leafrank}(u)$ (the number of leaves in level-order up to leaf $u$), $\mathsf{internalrank}(u)$ (the rank of the internal node $u$ in level-order), and $\mathsf{internalselect}(r)$ (the identifier of the \emph{r-th} internal node in level order).

\Section{Building the grammar}

Let $\mathcal{G}=\{V, \Sigma, \mathqhv{S}, \mathcal{R}\}$ be a context free grammar built from $T$ and that only produces strings in $\mathcal{T}$. $V$ is the set of nonterminals, $\Sigma$ is the alphabet of terminals, $\mathqhv{S}$ is the start symbol and $\mathcal{R}$ is the set of rules. Additionally, denoted the number of rules as $r=|\mathcal{R}|$. The grammar size $g$ is defined as the sum of the lengths of the right-hand sides of $\mathcal{R}$. We refer to the string $C$ in the right-hand side of the rule of $\mathqhv{S}$ as the \emph{compressed string} of $\mathcal{G}$, and its size is denoted as $c=|C|$.

We propose an iterative algorithm called $\mathsf{LMSg}$ for producing $\mathcal{G}$. In every step $i$, we partition the input text $T^{i}$ ($T^{1}=T$) and create a dictionary $\mathcal{D}^{i}$ with all the distinct $LMS$ substrings. Then, for every $F \in \mathcal{D}^{i}$, we create a new rule $\mathqhv{X} \rightarrow F$, where $\mathqhv{X}=p + o$ is the number of rules built before step $i$ and $o$ is the order of $F$ in $\mathcal{D}^{i}$. Finally, we create another text $T^{i+1}$ in which we replace the phrases with their nonterminal symbols, and if there is at least one symbol repeated in $T^{i+1}$, then we perform another iteration $i+1$ using $T^{i+1}$ as input. The algorithm ends when no more new phrases are created from the input text. This procedure is similar to that of Nunes \emph{et al.}~\cite{n2018gr}. Still, we go further and try to reduce the grammar size without losing information for inferring the eBWT of $\mathcal{T}$.

By using ISS to build $\mathcal{G}$ we can determine the relative order of the suffixes of $T$ prefixed by $LMS$ substrings just by looking at their nonterminal symbols. This idea is formally stated with the following lemma:

\begin{lemma}\label{lem:gram}
For two different nonterminals $\mathqhv{X},\mathqhv{Y} \in V$ produced in the same iteration of $\mathsf{LMSg}$, if $\mathqhv{X}<\mathqhv{Y}$, then the suffixes of $T$ whose prefixes are compressed as $\mathqhv{X}$ are lexicographically smaller than the suffixes whose prefixes are compressed as $\mathqhv{Y}$.
\end{lemma}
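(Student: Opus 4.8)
The plan is to prove the statement by induction on the iteration index of $\mathsf{LMSg}$, reducing it at every level to the known correctness of induced suffix sorting \cite{n2009li}. To set things up, write $T^1=T$ over $A^1=\Sigma$, and for $i\ge 1$ let $T^{i+1}$ be the text produced by iteration $i$ from $T^i$; it is a string over $A^{i+1}$, the set of nonterminals created in iteration $i$, and by construction the integer order on $A^{i+1}$ is exactly the rank order $\mathsf{LMSg}$ assigns to $\mathcal{D}^i$. For a position $q$ of $T^i$, let $\mathrm{src}_i(q)$ be the position of $T$ at which the expansion of $T^i[q]$ begins, so $T^i[q]$ expands to $T[\mathrm{src}_i(q)\,..\,\mathrm{src}_i(q)+|\exp(T^i[q])|-1]$ and $\mathrm{src}_1(q)=q$. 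With this notation, the suffixes of $T$ whose prefixes are compressed as a nonterminal $\mathqhv{X}$ of iteration $i$ (with rule $\mathqhv{X}\rightarrow F_X$) are exactly the suffixes $T[\mathrm{src}_i(q)\,..\,n]$ taken over the positions $q$ at which $F_X$ occurs as an LMS phrase of $T^i$.

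The statement I would prove by induction on $i$ is, call it $H_i$: for all $s_1,s_2\in A^i$ with $s_1<s_2$, every suffix $T[\mathrm{src}_i(q)\,..\,n]$ with $T^i[q]=s_1$ is lexicographically smaller than every suffix $T[\mathrm{src}_i(q')\,..\,n]$ with $T^i[q']=s_2$. The base case $H_1$ is immediate, since two such suffixes of $T$ differ already in their first symbol ($s_1$ versus $s_2$). The lemma is precisely the inductive step $H_i\Rightarrow H_{i+1}$, because $A^{i+1}$ is the set of nonterminals produced in iteration $i$ and order in $A^{i+1}$ equals rank order there.

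For the step, fix $\mathqhv{X}<\mathqhv{Y}$ in $A^{i+1}$ with rules $\mathqhv{X}\rightarrow F_X$ and $\mathqhv{Y}\rightarrow F_Y$; here $F_X\neq F_Y$ are LMS substrings of $T^i$ and $F_X$ precedes $F_Y$ in the order used to rank $\mathcal{D}^i$ (lexicographic, with a proper prefix getting the greater rank). Take any LMS-phrase occurrence of $F_X$ in $T^i$, at position $q$, and any such occurrence of $F_Y$, at position $q'$, and argue in two steps. Step~(a): $T^i[q\,..\,|T^i|]<T^i[q'\,..\,|T^i|]$. This is the core correctness property of induced suffix sorting \cite{n2009li}: for two distinct LMS substrings, the rank order (with that prefix tie-break) coincides with the lexicographic order of the LMS suffixes of $T^i$ they prefix; if $F_X$ and $F_Y$ first disagree at a position inside both, the two suffixes disagree there the same way, and the tie-break rule is exactly what handles the remaining case in which $F_Y$ is a proper prefix of $F_X$. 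Step~(b): this inequality lifts to $T[\mathrm{src}_i(q)\,..\,n]<T[\mathrm{src}_i(q')\,..\,n]$. Equal symbols of $T^i$ have equal expansions, so if the two level-$i$ suffixes agree on their first $t$ symbols the two suffixes of $T$ agree on a common prefix $P$ of length $\sum_{j<t}|\exp(T^i[q+j])|$; after $P$ they either both terminate (one level-$i$ suffix being a prefix of the other, which carries over verbatim) or continue at positions beginning occurrences of $T^i[q+t]$ and of $T^i[q'+t]$, and since $T^i[q+t]<T^i[q'+t]$ in $A^i$ the induction hypothesis $H_i$ makes the first continuation smaller, and prepending $P$ preserves the inequality. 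Applying (a) and (b) for all choices of $q,q'$ gives $H_{i+1}$, and hence the lemma.

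I expect step~(a) to be the main obstacle: it is the one genuinely subtle point---the agreement between the rank order of LMS substrings, including the prefix tie-break, and the order of the LMS suffixes---and I would not reprove it from scratch but invoke it from the analysis of ISS \cite{n2009li}, since $\mathsf{LMSg}$ ranks each $\mathcal{D}^i$ exactly as ISS does. The remaining ingredients---the base case, the bookkeeping with $\exp$ and $\mathrm{src}_i$, and carrying the order-faithfulness property $H_i$ up one level---are routine; the only point that deserves care is fixing the convention for comparing a suffix of $T$ that is a prefix of another, which in the intended application is inherited from the ordering used to define the eBWT of $\mathcal{T}$.
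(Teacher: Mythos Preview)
The paper does not actually supply a proof of this lemma: it is stated immediately after the sentence ``By using ISS to build $\mathcal{G}$ we can determine the relative order of the suffixes of $T$ prefixed by $LMS$ substrings just by looking at their nonterminal symbols,'' and the text then moves on to the overlap issue without a \texttt{proof} environment. So your proposal is filling in what the paper leaves implicit, and the route you take---induction on the iteration index, with step~(a) delegated to the ISS correctness result of \cite{n2009li} and step~(b) lifting an inequality on suffixes of $T^i$ to one on suffixes of $T$ through the hypothesis $H_i$---is exactly the argument one extracts from the ISS recursion, and it is correct. In particular, your formulation of $H_i$ over \emph{all} positions of $T^i$ (not only LMS positions) is the right strengthening, since in step~(b) you need to apply it at the arbitrary positions $q+t$ and $q'+t$.

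One cosmetic point: in step~(b) your formula $|P|=\sum_{j<t}|\exp(T^i[q+j])|$ tacitly assumes consecutive phrases abut with no overlap, whereas the paper notes right after the lemma that consecutive LMS substrings share one character. The actual length of the common prefix is then slightly smaller, but the equality of the two prefixes and the alignment of the continuations at $\mathrm{src}_i(q+t)$ and $\mathrm{src}_i(q'+t)$ are unaffected, so the argument goes through unchanged.
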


The only problem is that the occurrences of the phrases in $T^i$ overlap by one character, and that produces redundancy in $\mathcal{R}$. We solve it by discarding the first symbol of every $LMS$ substring. 

\begin{lemma}\label{lem:rank}
The suffix $F[j..|F|]$ of an $LMS$ substrings $F$ can still be used to get the lexicographical rank of a suffix in $T^{i}$ prefixed with it as long as  $|F|-j+1>1$.
\end{lemma}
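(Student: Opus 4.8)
The plan is to reduce the statement to two structural facts about $LMS$ substrings and then reuse the induced-sorting argument behind Lemma~\ref{lem:gram}. The first fact is that every $LMS$ substring $F$ other than the lone sentinel satisfies $|F|\ge 3$. Writing $F=T^i[\ell..\ell']$ with $T^i[\ell]$ and $T^i[\ell']$ the two bounding $LMS$ characters, if we had $\ell'=\ell+1$ then $T^i[\ell]$ would be S-type (it is an $LMS$ character) and, simultaneously, $T^i[\ell]=T^i[\ell'-1]$ would be L-type (since $T^i[\ell']$ being $LMS$ forces its predecessor to be L-type), a contradiction; hence $\ell'\ge\ell+2$. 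Thus for the value $j=2$ actually used by $\mathsf{LMSg}$ the hypothesis $|F|-j+1>1$ holds automatically, and in general it is precisely the condition that at least the two trailing symbols $F[|F|-1]$ and $F[|F|]$ survive in $F[j..|F|]$; the only case it rules out is the length-one sentinel substring, which cannot be trimmed at all.

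The second fact is that the type of a position is determined by the symbols from that position onwards: by the definition of L- and S-type characters, classifying $T^i[p]$ amounts to scanning $T^i[p],T^i[p+1],\dots$ up to the first strict inequality, and never inspects any symbol to the left of $p$. Consequently, discarding the prefix $F[1..j-1]$ leaves the types of all remaining symbols $F[j],\dots,F[|F|]$ unchanged inside $T^i$; in particular $F[|F|]$ is still an $LMS$ character, and because $|F|-j+1>1$ keeps its predecessor $F[|F|-1]$, the trimmed string $F[j..|F|]$ still ends with the same L-to-S transition that anchors the tail of an $LMS$ substring.

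With these facts I would then argue exactly as for Lemma~\ref{lem:gram}. Induced suffix sorting assigns a rank to a suffix of $T^i$ by looking only at the symbols from its starting position onwards together with their (right-determined) types, propagating the tie-breaking from the subsequent $LMS$ positions, and never consulting the symbols that precede the start. Since $F[j..|F|]$ occurs in $T^i$ at position $\ell+j-1$, and since deleting $F[1..j-1]$ has changed neither the symbols nor the types of $T^i[\ell+j-1..]$, the same computation produces, for the suffix $T^i[\ell+j-1..]$ prefixed by $F[j..|F|]$, a rank that is a function of $F[j..|F|]$ (together, recursively, with the ranks of the phrases that follow it, just as in the untrimmed construction). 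When two trimmed substrings $F[j..|F|]$ and $G[j'..|G|]$ happen to coincide while $F\neq G$, the discarded prefixes $F[1..j-1]$ and $G[1..j'-1]$ are exactly the overlapping tails of the preceding phrases, so the lost distinction is restored one level up, through the previous phrase; and the recursion still terminates because $|F|-j+1>1$ guarantees that no phrase degenerates to a single, already-$LMS$ symbol.

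The step I expect to be the main obstacle is making that last point fully rigorous: I must check that re-charging the dropped symbol $F[1]$ to the previous phrase never breaks the correspondence between the order of the new nonterminals and the order of the shifted suffixes they now represent, i.e.\ that Lemma~\ref{lem:gram} survives verbatim under the trimmed parsing. Concretely, one has to rule out that shortening $F$ to $F[j..|F|]$ introduces a spurious equality that conflates two suffixes which should be separated and cannot be separated later, and this is exactly where the bound $|F|-j+1>1$, reinforced by the fact that genuine $LMS$ substrings have length at least three, does the real work.
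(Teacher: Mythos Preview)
Your second structural fact---that the trimmed string $F[j..|F|]$ still ends with the L-to-S transition inherited from $F$---is exactly the observation the paper exploits, but the paper packages it differently and far more economically. Instead of re-running the induced-sorting machinery operationally, the paper passes to the \emph{type description} $D'$ of each phrase (a binary string with $\sfnum{1}$ for L-type and $\sfnum{0}$ for S/LMS-type) and observes that the pattern $\sfnum{10}$ occurs only as the final two symbols of any such description. This makes the set of descriptions prefix-free, and under the hypothesis $|F|-j+1>1$ the suffix $F[j..|F|]$ keeps that terminal $\sfnum{10}$, so the truncated descriptions remain prefix-free. The ranking question then reduces immediately to the ISS tie-breaking rule of Nong et~al.: whenever one string is a textual prefix of another, their type descriptions already differ and determine the order. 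No recursion-level bookkeeping is needed.

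By contrast, your route tries to follow the ISS computation step by step and then ``re-charge'' the discarded symbol $F[1]$ to the preceding phrase. You yourself flag this as the main obstacle, and indeed it is a genuine gap as written: you never establish that two distinct trimmed phrases cannot become indistinguishable \emph{at the current level}, which is precisely what the lemma asserts. The prefix-free argument dispatches this in one line, whereas your re-charging idea would have to argue about interactions across iterations of $\mathsf{LMSg}$, which is both harder and stronger than what the lemma requires. Your length observation $|F|\ge 3$ and the right-determinacy of types are correct and useful warm-ups, but once you have noted that the L-to-S tail survives, the cleanest move is to conclude prefix-freeness of the descriptions and invoke the ISS comparison rule directly, rather than unrolling the sorting.
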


\begin{proof}
Let a string $D'$ over the alphabet $[\sfnum{0},\sfnum{1}]$ be the \emph{description} of an $LMS$ string $F$. If $F[j]$ is \emph{L-type}, then $D'[j]=\sfnum{1}$ and if $F[j]$ is \emph{S-type} or \emph{LMS-type}, then $D'[j]=\sfnum{0}$. Now consider the set $\mathcal{U}$ with the descriptions of all the phrases of $\mathcal{D}^{i}$. As the pattern $\emph{LS}=\sfnum{10}$ only appears as a suffix in the descriptions, $\mathcal{U}$ is a prefix-free set. Therefore, if an $LMS$ string is a prefix of another $LMS$ substring, then we can decide their relative orders by looking at their descriptions as explained in~\cite{n2009li}.  
\end{proof}

We also ensure that no $\mathqhv{X} \in V$ recursively expands to the suffix-prefix concatenation of two or more strings of $\mathcal{T}$. We call this property \emph{string independence} of $\mathcal{G}$. To guarantee it, the string partition must also be independent.

\begin{definition}\label{def:strind}
The partition of $T^{i}$ in the i-th iteration of $\mathsf{LMSg}$ is string-independent iff the recursive expansion of every symbol $T^{i}[u]$ spans at most one string $T_j \in \mathcal{T}$.
\end{definition}

During the execution of $\mathsf{LMSg}$, we ensure the string independence of the partition by cutting each $LMS$ substring $F$ in $p>1$ segments if the symbols of $F$ cover $p$ different strings of $\mathcal{T}$. We call those segments where the last character recursively expands to a suffix of some $T_j \in \mathcal{T}$ \emph{suffix phrases}. Although the suffix phrases do not meet Lemma~\ref{lem:rank} (they do not necessarily have the $LS$ suffix in their descriptions), it is still possible to assign them unique lexicographical ranks.

\begin{lemma}\label{lem:suffs}
A suffix phrase $S$ generated in a string independent partition cannot be a prefix of any other string in $\mathcal{D}^{i}$.
\end{lemma}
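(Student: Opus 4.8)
The plan is to argue by contradiction using the type-description encoding introduced in the proof of Lemma~\ref{lem:rank}. Suppose a suffix phrase $S$, generated in a string-independent partition of $T^i$, were a proper prefix of some other phrase $F \in \mathcal{D}^{i}$. I would first recall that, by Definition~\ref{def:strind} and the cutting rule described above, $S$ arises precisely because its last symbol recursively expands to a suffix of some $T_j \in \mathcal{T}$; in terms of the original text $T$, the last position of $S$ abuts a sentinel \texttt{\$}. The key observation is then that the character following the occurrence of $S$ (when $S$ is read as a prefix of $F$ inside $T^i$) would have to be the symbol that sits just after that boundary, i.e.\ a symbol whose expansion begins at a sentinel \texttt{\$} of $T$ — but in $F$, which by construction of $\mathsf{LMSg}$ stays within a single string, no such sentinel can occur in an interior position.

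Concretely, I would set up the contradiction at the level of the underlying string $T$ rather than $T^i$. Let the occurrence of $S$ as a prefix of $F$ correspond to a window $T[a..b]$, where $T[b]$ is the last character expanded by the last symbol of $S$, hence $T[b+1]=\texttt{\$}$ (the end of some $T_j$). Since $S$ is a proper prefix of $F$, the phrase $F$ extends at least one symbol further, so its expansion covers $T[a..b']$ with $b'>b$, and therefore its expansion contains the character $T[b+1]=\texttt{\$}$ strictly in its interior. This contradicts the string independence of the partition (Definition~\ref{def:strind}), which forces the recursive expansion of every symbol of $T^i$ — in particular the symbol for $F$ — to span at most one string $T_j$, and hence to contain no interior sentinel. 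A symmetric, slightly more careful version of the argument handles the case $F=S$ viewed with a different occurrence, but that is excluded since $S$ and $F$ are distinct phrases and the claim only concerns $S$ being a prefix of a \emph{different} element of $\mathcal{D}^{i}$.

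The one subtlety I would be careful about — and what I expect to be the main obstacle — is the interplay between the two cutting operations that $\mathsf{LMSg}$ performs: the removal of the first symbol of each $LMS$ substring (Lemma~\ref{lem:rank}) and the string-independence cut (Definition~\ref{def:strind}). After deleting the leading character, a suffix phrase $S$ is a truncated object, and I must make sure that the "boundary symbol'' whose expansion ends at a sentinel is genuinely retained as the \emph{last} symbol of $S$ and not itself shaved off; this requires checking that the string-independence cut is applied to the already-truncated $LMS$ substrings so that the suffix-phrase boundary always coincides with a retained symbol. Once that bookkeeping is pinned down, the contradiction above goes through cleanly, and it also makes transparent why the lemma is exactly what is needed to assign suffix phrases unique lexicographic ranks: no suffix phrase can be "swallowed'' as a prefix of another dictionary entry, so its rank is determined unambiguously by its own sequence together with its type description.
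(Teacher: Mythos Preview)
Your argument is essentially the same as the paper's: assume the suffix phrase is a proper prefix of another dictionary phrase, observe that the suffix phrase's recursive expansion terminates at a sentinel boundary in $T$, and conclude that the longer phrase would then have to expand across that sentinel, contradicting string independence. The paper states this in two lines (writing the expansion of the suffix phrase as $A\texttt{\$}$ and that of the longer phrase as $A\texttt{\$}B$); your version is more detailed and flags the bookkeeping issue about the order of the two truncations, which the paper simply does not discuss, but the underlying idea is identical.
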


\begin{proof}
Assume $F$ is a prefix of another string $F'$. Also assume that $F$ recursively expands to the substring $A\texttt{\$}$ of $T$. As $F$ is a prefix, it means that $F'$ must expand to a substring $A\texttt{\$}B$, with $B$ also being a substring of $T$. This implication contradicts the definition of string independence as $F'$ spans two consecutive strings.
\end{proof}

\paragraph{Reducing the number of nonterminals.} 
We discard the phrases that are not useful for either compressing or producing the eBWT of $\mathcal{T}$. The symbols in these phrases are \emph{transferred} to subsequent iterations of $\mathsf{LMSg}$ hoping they will be encapsulated within more useful contexts. We do not consider a substring as a phrase for $\mathcal{D}^{i}$ in two cases; (i) all its symbols appear once in $T^{i}$ or (ii) its length is less than two.

With this modification, the partition of $T^{i}$ now yields two sets, $\mathcal{D}^{i}$ and a set $I^{i}$ with the symbols of $T^{i}$ to be transferred. If $\mathcal{D}^{i}$ is empty, then we stop $\mathsf{LMSg}$ and return $\mathcal{G}$. If not, then we sort the phrases of $I^{i} \cup \mathcal{D}^{i}$ in lexicographical order. Once we finish, we update the left-hand side in $\mathcal{R}$ of every nonterminal $s \in I^{i}$ to $p+r'$, where $p$ is the size of $\mathcal{R}$ before iteration $i$ and $r'$ is the rank of $s$ in $I^{i} \cup \mathcal{D}^{i}$. We also update the previous references to $s$ in the right-hand sides of $\mathcal{R}$. For the phrases in $\mathcal{D}^{i}$, we create new nonterminals using their ranks in $I^{i} \cup \mathcal{D}^{i}$.

\paragraph{Reducing the grammar size.}  We scan $\mathcal{R}$ and change every left-hand character with the smallest unused symbol (the nonterminals produced by $\mathsf{LMSg}$ are non-consecutive due to the transfer of symbols). As we do the replacement, we keep track of the changes so we can update the references of the characters in the right-hand sides of $\mathcal{R}$. Once the grammar is collapsed, we recursively create new rules from the suffixes of size two that appear in more than one distinct right-hand side, and we stop when all such suffixes are unique. We refer to these new nonterminals as SP (suffix pairing). It might happen that the complete sequence $F$ of an $\mathsf{LMSg}$ rule $\mathsf{X} \rightarrow F$ appears as a proper suffix in one or more right-hand sides. In such situation, we do not create a new rule but reuse the value of $\mathsf{X}$ to replace those proper suffixes. When this happens, we consider $\mathqhv{X}$ to have a \emph{dual} context as it occurs as an $\mathsf{LMSg}$ nonterminal but also as an SP nonterminal.

\paragraph{Encoding the grammar.} 

We use the \emph{grammar tree} data structure proposed by Claude \emph{et al}.~\cite{cn2012im} (denoted here as \gt{}) to store $\mathcal{G}$. We make, however, some modifications to compute the eBWT of $\mathcal{T}$ in a more efficient way. The procedure is as follows; we create a root node labeled with the start symbol $\mathqhv{S}$ and with $c$ children, one for every symbol in the right-hand of its rule in $\mathcal{R}$. Then, we create the nodes in the subtrees of the root by visiting in level-order the rules of the nonterminals to which $C$ recursively expands. During this process, when we reach a rule $\mathqhv{X} \rightarrow F$ for the first time, we create a new internal node $v$ with $|F|$ children and labeled with $x+\sigma$, where $x$ is the number of internal nodes in level-order up to $v$. Nevertheless, if the symbol has a dual context, then we create $v$ only if the occurrence of $\mathqhv{X}$ in the visit corresponds to an $\mathsf{LMSg}$ nonterminal. When this is not the case, we create a leaf $v'$ with an empty label instead. We refer to $x+\sigma$ as the \emph{identifier} of $\mathqhv{X}$ in \gt{}. The next time we reach this rule in the traversal, we create a leaf $v'$ labeled with $x+\sigma$. In the case the identifier is still unknown (i.e., $\mathqhv{X}$ has dual context and all the occurrences we have visited so far are SP), we leave the label of $v'$ empty. Later, when we reach the first occurrence of $\mathqhv{X}$ as $\mathsf{LMSg}$, we create a new internal node $v$ and label with $x+\sigma$ all the empty leaves that should point to this identifier. Finally, when we visit a terminal symbol, we create a leaf labeled with its value. We encode the topology of the resulting tree in a bitmap $K$ using LOUDS. Additionally, the leaf labels are stored in a vector $Z$ using Canonical Huffman codes~\cite{sch964gen}.

For simulating in \gt{} a traversal of the parse tree of $\mathcal{G}$ we use the constant-time navigational functions $\mathsf{child}$ and $\mathsf{parent}$ defined for LOUDS, but also an extra function $\mathsf{label}(v)$ that returns the label of a node $v$. If $v$ is a leaf, then the function returns $Z[\mathsf{leafrank}(v)]$. On the other hand, if $v$ is an internal node, then it returns $\mathsf{internalrank}(v)+\sigma$. When we reach a leaf $u$, if $\mathsf{label}(u) \leq \sigma$, then we stop descending as we reach a terminal symbol. If that is not the case, then we continue the traversal from the subtree rooted at $v=\mathsf{internalselect}(K, \mathsf{label}(v)-\sigma)$. 

\Section{Building the eBWT from the grammar}\label{sec:lpg2bwt}

Our framework for building the eBWT of $\mathcal{T}$ consists of two algorithms, $\mathsf{GLex}$ and $\mathsf{infBWT}$. The first one computes the original lexicographical ranks of the nonterminals generated by $\mathsf{LMSg}$ and the second uses these ranks to produce the eBWT. 

\paragraph{Computing the ranks of the nonterminals.} $\mathsf{GLex}$ is an iterative method that reconstructs the steps of $\mathsf{LMSg}$. In every iteration, the algorithm produces a set $L^{i} \in [1..r+\sigma]$ with the identifiers in \gt{} for the phrases in $\mathcal{D}^{i}$. Then, it computes another set $R^{i}$ with the lexicographical ranks of these phrases. Finally, it creates a function $f^{i}: L^{i} \rightarrow R^{i}$ that maps the identifier $l \in L^{i}$ of a phrase in $\mathcal{D}^{i}$ with its lexicographical rank. The result of $\mathsf{GLex}$ is a set of $h$ distinct triplets $(L^{i}, R^{i}, f^{i})$, where $h$ is the number of iterations of $\mathsf{LMSg}$.

For computing $L^{i}$, we visit the internal nodes of \gt{} in level-order and check which of them encode phrases of $\mathcal{D}^{i}$. For this task we use the following lemma:

\begin{lemma}\label{lem:stair}
Let $\mathqhv{X}\rightarrow F \in \mathcal{R}$ be a nonterminal rule generated by $\mathsf{LMSg}$. If all the suffixes of $F$ up to position $1<k\leq|F|-1$ appear in more than one right-hand side in $\mathcal{R}$, then after reducing the grammar size, every subtree rooted at some node labeled with $\mathqhv{X}$ in the parse tree will have the original last $|F|-k+1$ children of $\mathqhv{X}$ recursively encapsulated from right to left inside new internal nodes.
\end{lemma}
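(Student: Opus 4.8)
The plan is to trace, round by round, how the suffix-pairing (SP) part of the grammar-size-reduction step rewrites the right-hand side $F=f_1f_2\cdots f_\ell$ of the rule $\mathqhv{X}\rightarrow F$ (write $\ell=|F|$; the condition $1<k\le\ell-1$ forces $\ell\ge 3$, so at least $f_1$ is never absorbed), and then to read off the resulting shape of the parse tree. The only structural fact about SP I would isolate first is a \emph{lock-step} property: within one round, whether the length-$2$ suffix of a right-hand side gets replaced depends only on whether that current length-$2$ string occurs as a suffix of at least two right-hand sides, and when it does, every such occurrence is replaced by one and the same new nonterminal (or, under the dual-context clause, by one and the same reused $\mathsf{LMSg}$ nonterminal). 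Hence two right-hand sides that end in a common string have that common tail rewritten identically, in lock-step, across all SP rounds.

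With that in place, I would prove by induction on $i=1,\dots,\ell-k$ that after SP round $i$ the rule for $\mathqhv{X}$ is $\mathqhv{X}\rightarrow f_1\cdots f_{\ell-i-1}\,Y_i$, accompanied by the SP rules $Y_1\rightarrow f_{\ell-1}f_\ell$ and $Y_t\rightarrow f_{\ell-t}\,Y_{t-1}$ for $2\le t\le i$, so that $Y_i$ derives exactly $f_{\ell-i}f_{\ell-i+1}\cdots f_\ell$. The base case $i=1$ is immediate: the length-$2$ suffix of $\mathqhv{X}$'s rule is $f_{\ell-1}f_\ell=F[\ell-1..\ell]$, which by hypothesis (position $j=\ell-1$) appears in more than one right-hand side, so round $1$ creates $Y_1\rightarrow f_{\ell-1}f_\ell$ and substitutes it. For the step, after round $i-1$ the current length-$2$ suffix of $\mathqhv{X}$'s rule is $f_{\ell-i}\,Y_{i-1}$; since $k\le\ell-i\le\ell-1$, the hypothesis (position $j=\ell-i$) supplies a right-hand side $\rho\ne F$ that ended, before SP, in $F[\ell-i..\ell]$, and by the lock-step property $\rho$ has had its tail $f_{\ell-i+1}\cdots f_\ell$ rewritten into $Y_{i-1}$ exactly as in $\mathqhv{X}$'s rule, so $\rho$ now also ends in $f_{\ell-i}\,Y_{i-1}$. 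Thus round $i$ replaces this shared length-$2$ suffix by a new $Y_i\rightarrow f_{\ell-i}\,Y_{i-1}$, giving $\mathqhv{X}\rightarrow f_1\cdots f_{\ell-i-1}\,Y_i$ and closing the induction. Taking $i=\ell-k$ yields $\mathqhv{X}\rightarrow f_1\cdots f_{k-1}\,Y_{\ell-k}$ together with $Y_1\rightarrow f_{\ell-1}f_\ell$ and $Y_t\rightarrow f_{\ell-t}\,Y_{t-1}$ for $2\le t\le\ell-k$.

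Finally I would translate this to the parse tree. Every node labeled $\mathqhv{X}$ now has children $f_1,\dots,f_{k-1}$ followed by a single node labeled $Y_{\ell-k}$; the subtree of $Y_1$ has the two children $f_{\ell-1},f_\ell$, and the subtree of each $Y_t$ with $2\le t\le\ell-k$ has the two children $f_{\ell-t}$ and $Y_{t-1}$. Therefore the original last $\ell-k+1=|F|-k+1$ children $f_k,f_{k+1},\dots,f_\ell$ of $\mathqhv{X}$ sit nested inside the new internal nodes $Y_{\ell-k},\dots,Y_1$, assembled one symbol at a time starting from the right end ($f_{\ell-1},f_\ell$ first, then $f_{\ell-2}$, and so on toward $f_k$) --- exactly the claimed staircase. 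I would also dispose of two harmless variants: (i) if some $Y_t$'s right-hand side coincides with the complete right-hand side of an existing $\mathsf{LMSg}$ rule, the dual-context clause reuses that nonterminal instead of creating a new one, but the subtree shape is unchanged; and (ii) if some longer suffix $F[j..\ell]$ with $j<k$ happens to be shared too, SP continues the chain one or more steps further to the left, which only nests $f_k,\dots,f_\ell$ deeper and still meets the statement.

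The step I expect to be the main obstacle is making the lock-step claim airtight inside the induction: one must verify that in round $i$ the length-$2$ suffix $f_{\ell-i}\,Y_{i-1}$ at the end of $\mathqhv{X}$'s rule and the corresponding tail of the witness $\rho$ are literally the same length-$2$ string, get replaced by the same nonterminal, and that no earlier round ever touched the prefix $f_1\cdots f_{\ell-i-1}$. This is exactly where the per-position form of the hypothesis is needed; it is in fact equivalent to simply asking that $F[k..\ell]$ occur in more than one right-hand side, since any right-hand side ending in $F[k..\ell]$ also ends in every shorter suffix $F[j..\ell]$ with $k\le j\le\ell-1$.
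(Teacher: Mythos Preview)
Your proposal is correct and follows the same underlying idea as the paper: the stair shape arises because the suffix-pairing step replaces length-$2$ suffixes recursively from right to left. The paper's own proof is a single-sentence observation to that effect, whereas you supply the explicit round-by-round induction, the lock-step argument, and the edge-case handling; your version is therefore a fully worked-out instance of the same approach rather than a different one.
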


\begin{proof}
Consider a node $v$ in the parse tree of $\mathcal{G}$ that represents the occurrence of an $\mathsf{LMSg}$ nonterminal. After reducing the grammar size, its subtree adopts a stair-like shape as the SP rules are recursively built from right to left. 
\end{proof}

By using the stair-like pattern described in Lemma~\ref{lem:stair}, we can recognize occurrences of $\mathsf{LMSg}$ nonterminals just by looking at the topology of the parse tree of $\mathcal{G}$.

\begin{lemma}\label{lem:lmsnode}
A node $v$ of \gt{} encodes the occurrence of a nonterminal produced in the iteration $i$ of $\mathsf{LMSg}$ if the leftmost child of $v$ in the parse tree is labeled with a symbol $l \in L^{i-1}$ and either $v$ if the leftmost child of its parent or the left sibling of $v$ is labeled with a symbol $l' \notin L^{i-1}$.
\end{lemma}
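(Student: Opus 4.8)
My plan is to argue by induction on $i$, in parallel with the way $\mathsf{GLex}$ rebuilds the iterations of $\mathsf{LMSg}$, so that $L^{i-1}$ may be assumed to already name exactly the iteration-$(i-1)$ nonterminals; the base case $i=1$ is immediate, since $T^{1}=T$ is a string of terminals, which play the role of $L^{0}$. The starting point is the composition of $T^{i}$: every one of its symbols is either an iteration-$(i-1)$ nonterminal --- and these are precisely the symbols whose \gt{} label lies in $L^{i-1}$ --- or a symbol transferred from an earlier iteration, whose label lies outside $L^{i-1}$. Moreover, each $LMS$ factor of $T^{i}$ that contains an iteration-$(i-1)$ nonterminal is a \emph{good} phrase (it has a repeated symbol and length at least two) and hence is turned into a rule of $\mathcal{D}^{i}$, so a symbol whose label lies in $L^{i-1}$ occurs in the parse tree of $\mathcal{G}$ only as a child of an iteration-$i$ nonterminal node or, after reducing the grammar size, as a child of an SP (suffix-pairing) node nested in the stair of such a node (Lemma~\ref{lem:stair}).

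First I would take a node $v$ meeting the hypotheses. Its leftmost child carries a label $l\in L^{i-1}$, so by the previous paragraph $v$ is the parse-tree parent of an occurrence of an iteration-$(i-1)$ nonterminal; hence $v$ is either (i) an iteration-$i$ nonterminal node or (ii) an SP node inside some iteration-$i$ stair. I would rule out (ii) with the second hypothesis and Lemma~\ref{lem:stair}: in a stair an SP node is always the rightmost child of its parent, and that parent has at least two children (the encapsulated suffix being proper), so $v$ cannot be the leftmost child of its parent; and the left sibling of $v$ is the parse-tree node of the symbol of $T^{i}$ immediately preceding $l$ inside the iteration-$i$ phrase that contains the stair. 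Since $v$'s body is a \emph{proper} suffix of that phrase's body, this preceding symbol is an internal symbol of the phrase and --- once the location of transferred symbols is settled (see below) --- again an iteration-$(i-1)$ nonterminal, so its label lies in $L^{i-1}$, contradicting the hypothesis on the left sibling. Hence (i) holds: $v$ encodes an occurrence of an iteration-$i$ nonterminal, and by Lemma~\ref{lem:rank} its body is the corresponding $LMS$ factor of $T^{i}$ with its first (dropped) symbol removed, the leftmost child $l$ being exactly the symbol of $T^{i}$ right after the $LMS$ position that opens that factor.

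The hard part will be the claim invoked twice above: that a transferred symbol never occupies the position immediately after an $LMS$ position of a good factor of $T^{i}$ --- equivalently, that the leftmost child of every iteration-$i$ nonterminal node is an iteration-$(i-1)$ nonterminal --- and, more generally, a precise account of where transferred symbols and SP nodes sit in the parse tree relative to the $LMS$ factorisation of $T^{i}$. I expect this to need a careful analysis of how the $LMS$ partition of $T^{i}$ arises from that of $T^{i-1}$ through the replace-and-transfer step, so that the $LMS$ boundaries of consecutive texts line up in a controlled way, combined with the stair structure of Lemma~\ref{lem:stair}. Once that is in place, the two hypotheses are exactly what force $v$'s body to be a complete $LMS$ factor of $T^{i}$ with its first symbol removed, rather than a proper suffix of one or an $L^{i-1}$-labelled symbol that has been transferred into a later text.
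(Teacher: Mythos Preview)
Your core argument --- the dichotomy that a node whose leftmost child carries a label in $L^{i-1}$ is either an iteration-$i$ $\mathsf{LMSg}$ node or an SP node, followed by ruling out the SP case via the stair pattern of Lemma~\ref{lem:stair} and the left-sibling hypothesis --- is exactly the paper's proof. The paper's version is two sentences: it asserts the dichotomy without further justification and observes that an SP node, sitting inside a stair, necessarily has a left sibling whose label lies in $L^{i-1}$.

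Everything else in your proposal --- the induction on $i$, the analysis of where transferred symbols can land, the claim that every $LMS$ factor containing an iteration-$(i-1)$ nonterminal is a good phrase, and the ``hard part'' you single out --- goes beyond what the paper actually argues. The paper does not mention transferred symbols in this proof at all, nor does it justify why the leftmost child of an iteration-$i$ rule must carry a label in $L^{i-1}$; it simply takes the dichotomy as evident from the construction. You may well be right that these points deserve proof (and your side claim that any $LMS$ factor containing an iteration-$(i-1)$ nonterminal is automatically ``good'' would need care, since such a nonterminal need not itself be repeated in $T^{i}$), but for the purpose of reproducing the paper's proof, your middle paragraph already contains the entire argument; the rest is extra rigor the paper does not supply.
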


\begin{proof}
A nonterminal $v$ whose first child has a label $l \in L^{i-1}$ is either an $\mathsf{LMSg}$ nonterminal of the iteration $i$ or an SP nonterminal. If it is SP, then, due to the stair-like pattern, the label of its left sibling must be in $L^{i-1}$, otherwise $v$ is $\mathsf{LMSg}$.
\end{proof}

Once we compute the symbols in $L^{i}$, we decompress and sort their associated phrases to generate $R^{i}$. For that end, we regard $L^{i-1}$ as a set of logical leaves in the parse tree of $\mathcal{G}$. Thus, if during the decompression of an internal node $v=\mathsf{internalselect}(l-\sigma)$, with $l \in L^{i}$, we reach a node $v'$ with $\mathsf{label}(v') \in L^{i-1}$, then we do not visit its subtree but spell its symbol $f^{i-1}(\mathsf{label}(v')) \in R^{i-1}$. After decompressing $\mathcal{D}^{i}$, we apply the same string sorting mechanism of $\mathsf{LMSg}$. The function $f^{i}$ is implemented by encoding $L^{i}$ as a bitmap $L[1..r+\sigma]$ where $L[l]$ is set to $\sfnum{1}$ if $l \in L^{i}$ and $\sfnum{0}$ otherwise. Additionally, we augment $L$ with constant-time $\mathsf{rank}$ support \cite{Cla96,navarro2016compact}, so that $\mathsf{rank}(L,l)$ is the number of $\sfnum{1}$s in $L[1..l]$, and we store at position $R^{i}[\mathsf{rank}(L,l)]$ the lexicographical rank associated to $l$. Finally, we pass the triplet $(L^{i}, R^{i}, f^{i})$ to the next iteration $i+1$ to compute $L^{i+1}$, $R^{i+1}$ and $f^{i+1}$.

\paragraph{Inferring the eBWT.} $\mathsf{infBWT}$ is also an iterative process of $h$ steps. In the first one, we produce the eBWT $B^{h}$ of the compressed string $C$ by sorting the nodes at depth one of \gt{} in (circular) lexicographical order. For this task, we insert them in an array $A[1..c]$ such that if a node $v$ has order $o$ in $R^{h}$, then we store it in the bucket $o$ of $A$. Subsequently, we sort the buckets of $A$ independently. During the process, if two nodes have the same rank in $R^{h}$, then we walk through their right siblings until finding nodes with different labels. It might happen that one of the siblings we reach in the walk represents a suffix of a string in $\mathcal{T}$. In such case we move backward until finding the first left sibling encoding the suffix of another string. The idea is to simulate the circularity of the elements in $\mathcal{T}$. Once we finish the sorting, we insert in $B^{h}$ the orders in $R^{h}$ of the (circular) left siblings of the nodes in $A$.  

In the rest of the iterations, we receive as inputs the eBWT $B^{h-i+1}$ of $T^{h-i+1}$ (the temporary string of $\mathsf{LMSg}$) and the triplet $(L^{h-i}, R^{h-i}, f^{h-i})$. We scan $B^{h-i+1}$ from left to right to decompress the occurrences of the phrases in $\mathcal{D}^{h-i}$. As we spell a phrase $F \in \mathcal{D}^{h-i}$ from some position $B^{h-i+1}[j]$, we push every possible pair $(F[k],S)$, with $k \in [1..|F|-1]$, into a semi-external vector $\mathcal{Q}$, where $S$ is a proper suffix $F[j+1..|F|]$ of size at least two. To reduce the space usage, we store $S$ using a specific node $v$ of \gt{} from which we can decompress it later (an SP node in most of the cases). When $S=F[|F|]$, we first obtain the right context symbol $s$ in $T^{h-i+1}$ of $B^{h-i+1}[j]$ . Then, we associate $(F[|F|],s)$ to a new identifier $q>g$ and push $(F[|F|-1], q)$ to $\mathcal{Q}$.

After scanning $B^{h-i+1}$, we sort the \emph{distinct} right elements of $\mathcal{Q}$ by decompressing them from \gt{}. Once we finish, we rearrange $\mathcal{Q}$ according to the resulting ranks, and without changing the relative order of the elements with the same value. We thus extract $B^{h-i}$ by concatenating the left symbols of $\mathcal{Q}$. In the last step of $\mathsf{infBWT}$, the resulting $B^{h-i}$ is in fact the eBWT of $\mathcal{T}$.

\Section{Experiments}

We implemented our framework as a tool called LPG (\url{https://bitbucket.org/DiegoDiazDominguez/lms_grammar/src/bwt_imp2}). The software is written in \texttt{C++} and uses the \texttt{SDSL-lite} library~\cite{gbmp2014sea}. We compared the performance of LPG against BigRepair~\cite{g2019rpair}, 7-zip and the FM-index~\cite{ferragina2005indexing}. BigRepair (BR) is a space-efficient variation of RePair for large repetitive collections. We encoded the BigRepair grammars with the recent representation of Gagie \emph{et al.}~\cite{gag2020prac}, which allows fast random accession to substrings of the text. For the FM-Index, we consider both the regular version (FM) and the Run-Length compressed version (RLFM). The BWTs for the FM-indexes were calculated using egap~\cite{eg2019ext}. When parallelization was possible, we ran the experiments with 10 threads.


We used as input five distinct collections of reads produced from different human individuals. This data was obtained from the Human Genome Diversity Project. The datasets were identified with the number of individuals they contained. Their sizes in GB were 1=12.77, 2=23.43, 3=34.30, 4=45.89 and 5=57.37. All the reads were 152 character long and had an alphabet of six symbols (\texttt{A,C,G,T,N,\$}). The instance of BR with collection 5 returned an error so it was not included in the analyses. For dataset 1, we allowed BR to use at most 72GB (6x the input size) of working memory. However, with the rest of the collections we had to increase that value to 275,36 GB as the program was taking too long to finish. The performance of the compressors is shown in Figure 1. 

We measured the time for randomly accessing the reads from the compressed representations. To support fast accession in the FM-indexes, we sampled reads in the text at regular intervals. For every sampled element, we stored the BWT position of its last character. The sampling rate for RLFM was 0.05 while for FM was 1. In addition, we augmented the LPG instances with a bitmap $B[1..c]$ that mark in \gt{} the nodes at depth one that recursively expand to string suffixes. We also encoded the leaf labels of \gt{} using arrays of $\log r$-bit cells. The results are depicted in the left side of Table 1.~We implemented $\mathsf{GLex}$ and measured its space and time consumption. The results are shown on the right side of Table 1. All the experiments were carried out on a machine with Debian 4.9, 736 GB of RAM and processor Intel(R) Xeon(R) Silver @ 2.10GHz, with 32 cores.

\begin{figure}[t]\label{fig:perf}
\centering
\includegraphics[width=0.9\textwidth]{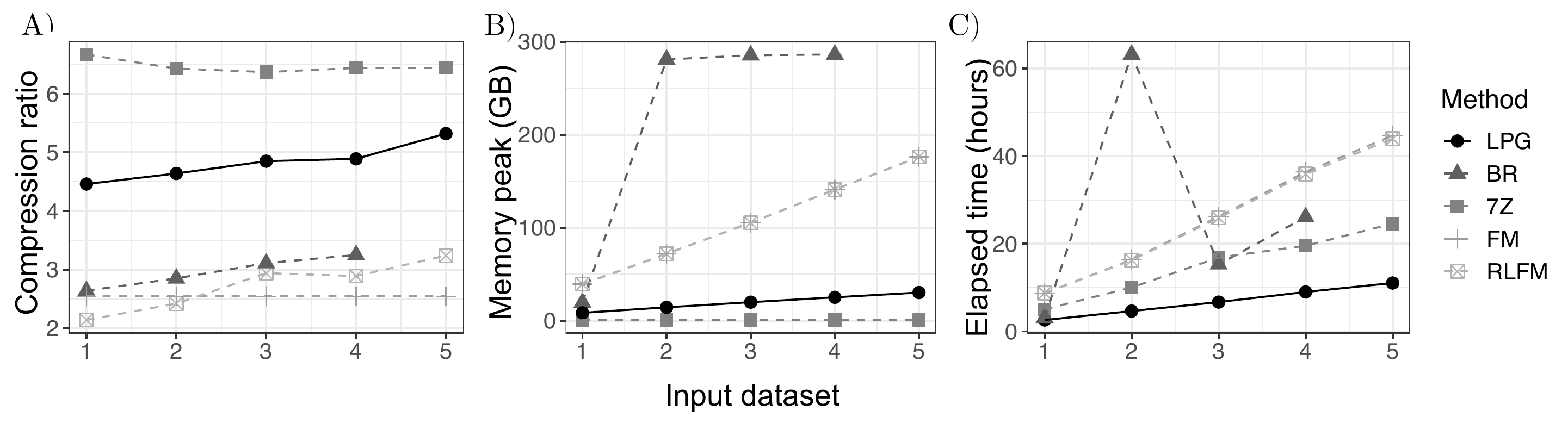}
\caption{Performance of the different compressors. The compression ratio is measured as the size of the plain text divided by the size of the final compressed representation.}
\end{figure}

\begin{table}[ht]
\centering
 \resizebox{0.75\textwidth}{!}{%
  \begin{tabular}{ccccc c c}
  \toprule
    & \multicolumn{4}{c}{Random access} &\multicolumn{2}{c}{$\mathsf{GLex}$} \\
    \cmidrule(lr){2-5}
    \cmidrule(lr){6-7}
    Input &  LPG & BR & RLFM & FM& Elap. time & Space usage \\
    \midrule
    1     &  104.30  & 98.67  & 6,699.06 & 90.40 & 0.25 &  0.49\\
    2     &  111.35  & 101.59 & 6,694.38 & 101.78 & 0.12 &  0.45\\
    3     &  124.04  & 98.56  & 7,422.91 & 109.68 & 0.08 &  0.42\\
    4     &  128.58  & 104.72 & 7,280.12 & 113.31 & 0.07 &  0.39\\
    \bottomrule
    \end{tabular}}
    \caption{LPG experiments. The left side depicts the average time in $\mu$secs to randomly accessing a read. The right side shows the running time and space usage of $\mathsf{GLex}$. The time is expressed as $\mu$secs per byte and the space as the fraction of the uncompressed input.}
    \label{table:tab}
\end{table}

\paragraph{Results and discussion.} The average compression ratio of LPG was 4.83. This result was better than the one obtained by BR and RLFM (2.96 and 2.73, respectively), but worse than that of 7Z (6.47). Although 7Z outperformed the other methods at reducing the space, the difference was reduced as the inputs grew and became more repetitive. For instance, the gap in the compression ratio between 7Z and LPG for collection 1 was 2.24, while for collection 5 was 1.12. The poor performance of BR may be due to its prepossessing step (Prefix-Free Parsing) did not capture well the repetitiveness in the reads. BR produced, on average, 322 million more grammar rules than LPG. On the other hand, the small compression ratios obtained by RLFM can be due to the number of BWT runs in our inputs was not as small as in other text families. The run heads represented, on average, 23\% of our inputs. Regarding the memory peaks, the consumption of 7Z was negligible (0.7 GB). In contrast, LPG required a much more considerable amount of working space (about 58\% of the input size). Still, this value was far less than that of BR and RLBWT, that used 7 and 3 times the input size, respectively. In elapsed time, LPG outperformed all the other methods. The instance of BR with collection 2 took much more time compared to collection 3 and 4 (63.18 hours versus 15.31 and 26.08 hours, respectively). We assume this behaviour is a bug in the implementation. The performance for randomly accessing the reads was similar between LPG and BR and FM (between 90 and 128 $\mu$secs), and slow for RLBWT (mainly because of the small sampling). Still, in all the cases the mean space overhead over the compressed representation was small (16\% for LPG, 7\% for FM and less than 1\% for RLFM). Finally, $\mathsf{GLex}$ required about 0.26 $\mu$secs per input byte and used an amount of working space proportional to half the space of the uncompressed collection.

\Section{References}
\bibliographystyle{IEEEbib}

\end{document}